\definecolor{mygreen}{RGB}{20,120,60}
\newtheorem{theorem}{Theorem}
\newtheorem{claim}[theorem]{Claim}
\newtheorem{lemma}[theorem]{Lemma}
\newtheorem{definition}{Definition}
\newtheorem{observation}[theorem]{Observation}
\title{Almost Optimal Stochastic Weighted Matching with Few Queries\footnote{A preliminary version of this paper appeared at EC 2018.}}
\author{
\texorpdfstring{\hspace*{-8pt}}{}%
\begin{tabular}{c} Soheil Behnezhad\thanks{Portions of this work were completed while the author was an intern at Upwork.}\\ University of Maryland\\\texttt{soheil@cs.umd.edu}\end{tabular} \and
\begin{tabular}{c} Nima Reyhani\\Upwork\\\texttt{nimar@upwork.com}\end{tabular}}
\date{\vspace{0.5cm}Last update: May 2018\vspace{0.2cm}}
\begin{document}

\newcommand{\xhdr}[1]{\vspace{0.30cm}\noindent\textbf{#1}}

\addauthor{sb}{blue}    
\addauthor{nr}{red}  

\newcommand{\E}[0]{\ensuremath{\mathbb{E}}}

\newcommand{\matching}[1]{\ensuremath{M(#1)}}
\newcommand{\expmatching}[1]{\ensuremath{\mathbb{M}\lbrack #1 \rbrack}}

\newcommand{\diff}[2]{\ensuremath{#1\Delta#2}}

\newcommand{\opt}[0]{\ensuremath{\textsc{opt}}}

\newcommand{\tsum}[0]{\ensuremath{\textstyle\sum}}

\newcommand{\etal}[0]{\textit{et al.}}

\newcommand{\restate}[3]{\vspace{0.3cm}\noindent\textbf{#1~#2.} (Restated) #3\vspace{0.3cm}}
\DeclareRobustCommand{\mybox}[2][gray!20]{%
\begin{tcolorbox}[
        left=0pt,
        right=0pt,
        top=0pt,
        bottom=0pt,
        colback=#1,
        colframe=#1,
        width=\dimexpr\textwidth\relax, 
        enlarge left by=0mm,
        boxsep=5pt,
        arc=0pt,outer arc=0pt,
        ]
        #2
\end{tcolorbox}
}

\newcommand{\coloredalign}[1]{
	\vspace{4pt}
	\mybox{
		\vspace{-5pt}
			#1
		\vspace{-10pt}
	}
	\vspace{-4pt}
}

\maketitle

\begin{abstract}
We consider the \emph{stochastic matching} problem. An edge-weighted general (i.e., not necessarily bipartite) graph $G(V, E)$ is given in the input, where each edge in $E$ is \emph{realized} independently with probability $p$; the realization is initially unknown, however, we are able to \emph{query} the edges to determine whether they are realized. The goal is to query only a small number of edges to find a {\em realized matching} that is sufficiently close to the maximum matching among all realized edges. This problem has received a considerable attention during the past decade due to its numerous real-world applications in kidney-exchange, matchmaking services, online labor markets, and advertisements.

Our main result is an {\em adaptive} algorithm that for any arbitrarily small $\epsilon > 0$, finds a $(1-\epsilon)$-approximation in expectation, by querying only $O(1)$ edges per vertex. We further show that our approach leads to a $(1/2-\epsilon)$-approximate {\em non-adaptive} algorithm that also queries only $O(1)$ edges per vertex. Prior to our work, no nontrivial approximation was known for weighted graphs using a constant per-vertex budget. The state-of-the-art adaptive (resp. non-adaptive) algorithm of Maehara and Yamaguchi [SODA 2018] achieves a $(1-\epsilon)$-approximation (resp. $(1/2-\epsilon)$-approximation) by querying up to $O(w\log{n})$ edges per vertex where $w$ denotes the maximum integer edge-weight. Our result is a substantial improvement over this bound and has an appealing message: No matter what the structure of the input graph is, one can get arbitrarily close to the optimum solution by querying only a constant number of edges per vertex.

To obtain our results, we introduce novel properties of a generalization of \emph{augmenting paths} to weighted matchings that may be of independent interest.
\end{abstract}

\clearpage

\section{Introduction}

\setstretch{1.15}
We consider the \emph{stochastic weighted matching} problem where the goal is to find a maximum weighted matching under \emph{uncertainty} assumption about the input graph. More precisely, the input contains a general (i.e., not necessarily bipartite) weighted graph $G=(V, E)$ and a constant probability $p > 0$. We are unsure about the edges of $G$ that exist in the actual \emph{realization}, however, the assumption is that each edge of $G$ is \emph{realized} independently with probability $p$. An algorithm is able to \emph{query} an edge in $G$ to determine whether it exists in the actual realization. The goal is to find a sufficiently large weighted matching in the realized graph by querying only a small number of the edges. An {\em adaptive} algorithm in this setting can have multiple rounds of adaptivity where the queries conducted in each round of adaptivity may depend on the results of the previous rounds. A {\em non-adaptive} algorithm, on the other hand, has only one round of adaptivity. That is, a non-adaptive algorithm should pick a degree bounded subgraph $H$ of $G$ whose expected matching is sufficiently close to that of $G$.

Ignoring the constraint on the number of queries, a trivial solution is to query all the edges of $G$ and report the maximum weighted matching among the realized edges. However, for many applications of the stochastic matching problem (which we elaborate on in the next section), querying an edge is \emph{time consuming}. We are, therefore, interested in the tradeoff between the fraction of the omniscient optimum solution that we achieve and the number of edges that are queried.

Our main contribution is an adaptive algorithm, that for any arbitrarily small constant $\epsilon > 0$, finds in expectation, a $(1-\epsilon)$-approximation of the maximum weighted matching in the realized graph, by making only a {\em constant} (dependant only on $\epsilon$ and $p$) number of per-vertex queries, in a constant number of rounds. We also show that it is possible to obtain a $(1/2-\epsilon)$-approximation via a non-adaptive algorithm. This result, apart from its theoretical importance, has an appealing practical message: No matter what the structure of the input graph is, one can get arbitrarily close to the optimum solution by querying only a constant number of edges per vertex.\footnote{We note that our algorithms works even if the realization probability of the edges are different as long as they are all bounded away from 0 by a constant. In such cases, it suffices to define $p$ to be the minimum realization probability among all the edges.} 

It is worth mentioning that the dependence on both $\epsilon$ and $p$ is crucial for any algorithm that achieves a $(1-\epsilon)$-approximation. Consider the simple example of a star graph where one vertex $v$ is connected to all other vertices by edges of weight 1. The expected weight of the omniscient maximum matching is $1 - o(1)$ since the matching size would be 1 if at least one of the edges are realized. To ensure that with probability at least $1-\epsilon$ one of the queried edges is realized, one needs to query at least $\Omega(\log{(1/\epsilon)}/p)$ edges of $v$.\footnote{Note that although the matching size here is either 0 or 1, a $(1-\epsilon)$-approximation is well defined because we are comparing the {\em expected} size of the realized matching to that of the original graph. Another closely related example is a complete graph for which one can show existence of a perfect matching w.h.p. and follow the same approach to prove the same lower bound for {\em every} vertex (e.g., see \cite{DBLP:conf/sigecom/AssadiKL16}).}


Stochastic matching settings have been studied extensively in the past decade after the initial paper of ~\cite{chen2009approximating}.
Motivated mainly by its application in \emph{kidney exchange}, this natural variant of stochastic matching problem was initially introduced in \cite{DBLP:conf/sigecom/BlumDHPSS15} and has received significant attention ever since \cite{DBLP:conf/sigecom/BlumDHPSS15, DBLP:conf/sigecom/AssadiKL16, DBLP:conf/sigecom/AssadiKL17, maeharatakanori}. In the literature, most algorithms work only for unweighted graphs~\cite{DBLP:conf/sigecom/BlumDHPSS15, DBLP:conf/sigecom/AssadiKL16, DBLP:conf/sigecom/AssadiKL17} where the goal is to approximate maximum cardinality matching. The only exception is the very recent paper of Maehara and Yamaguchi~\cite{maeharatakanori} that achieves a $(1-\epsilon)$-approximation via an adaptive algorithm for general (resp. bipartite) weighted graphs with $O(w \log{n}/\epsilon p)$ (resp. $O(w/\epsilon p)$) queries per vertex where $w$ denotes the maximum edge weight assuming that all edge weights are positive integers. They also show that a $(1/2-\epsilon)$-approximation is achievable via a non-adaptive algorithm using the same number of queries per vertex. Our result is a substantial improvement over this bound as it has no dependence on the structural properties of the input graph such as $n$ or $w$.

Our techniques are based on a novel set of definitions and lemmas for weighted \emph{augmenting components} that generalize augmenting paths to weighted graphs. Despite being useful objects when the goal is to find a large cardinality matching, augmenting paths have rarely been useful for weighted graphs in different settings of the matching problem. These properties may be of independent interest for generalizing augmenting path based techniques to weighted graphs.\footnote{There is also a notion of ``augmenting paths" for flow algorithms. Here we only refer to augmenting path based techniques that are used in matching algorithms.}

\subsection{Applications}\label{sec:applications}
\paragraph{Kidney exchange.} One of the main applications of the stochastic matching problem is in \emph{kidney exchange}, which includes two types of participants: an organ donor and an organ receiver. The donor is a healthy living individual willing to donate one of his/her two kidneys to the organ receiver (patient), without much of life threatening side effects. Usually, the donor is among family and friends. However, the kidney of a donor might not be a compatible match for the patient due to physiological reasons including the incompatible blood type, tissue-type, etc. The goal in kidney exchange is to provide a platform to swap kidneys between pairs of organ donors and patients in order to find a match. 

Usually the donors' and the patients' medical records (e.g. blood type) are available as an early indicator of the compatibility. The basic information is not conclusive, therefore, extra medical laboratory tests such as antibody screening is required to better estimate the odds of a successful transplant. The extra medical tests are time consuming and costly; specially for the patients who have been long waiting for the transplant or are in critical life condition.

Kidney exchange can be seen as a matching problem where each incompatible donor-patient pair is a vertex and there exists an edge between two vertices when the two kidneys can be swapped between pairs. A maximum matching of the donor-patients graph finds the maximum number of the donor-patient pairs who can swap kidneys. For many reasons, such as geographic proximity, wait time, age, etc. \cite{dickerson2012optimizing}, the design committee may introduce weights on the edges. This means, depending on the matching policy and the edge weights, maximizing the matching cardinality is not necessarily the best solution.

The stochastic matching problem, here, helps in reducing the number of costly and time consuming extra medical tests which should be performed to find the compatibility of the donor-patient pairs. The algorithm selects (i.e., queries) only a small set of donor-patient pairs for extra medical lab-tests so that in expectation, a sufficiently large matching exists within the selected pairs that pass the test in the realized graph.

There has been a plethora of studies on kidney exchange, particularly on stochastic settings \cite{akbarpour2014dynamic, anderson2015dynamic, anderson2015finding, awasthi2009online, dickerson2012dynamic, dickerson2013failure, dickerson2015futurematch, manlove2012paired, unver2010dynamic}. We refer interested readers to the paper of \cite{DBLP:conf/sigecom/BlumDHPSS15} for a more detailed discussion.

\paragraph{Online labour markets.} Traditional full-time employment has continuously given way to {\em flexible} contract work. This has resulted in a growth of demand for the services of the companies such as Upwork, Guru, Freelancer and Fiverr that facilitate working relationships between freelancers and employers. The parties in such markets have often more options than they can consider and interview with. In the stochastic matching context, job descriptions and the freelancers can be represented by vertices in a bipartite graph where an edge between a freelancer and a job posting would represent whether the freelancer is a good fit for the job. While initial job descriptions by an employer rule out some of the edges, there is no way of having a full knowledge about the actual graph beforehand. Here, querying an edge maps to the pair having an interview, and therefore, the stochastic matching procedure helps in minimizing the number of interviews while achieving a near optimal solution. Edge weights, in this context, help in maximizing, say, the social welfare or other objective functions.

By analogy, stochastic matching also contributes to other \emph{matching services}, such as online dating services.

\subsection{The Model}\label{sec:model}
In the \emph{stochastic setting}, for input graph $G=(V, E)$, the edges in $E$, independently from each other, are realized w.p. $p$.\footnote{Throughout the paper, we use \textit{w.p.} to abbreviate ``with probability".} For a subset $E'$ of $E$, we define $E'_p$ to be a random variable corresponding to different \emph{realizations} of $E'$ where each edge in $E'$ is \emph{realized} independently w.p. $p$.

In the \textit{stochastic matching problem}, a graph $G = (V, E)$ with a non-negative weight $w_e$ for each edge $e \in E$ as well as the realization probability $p$ are given the input. A realization $\mathcal{E}$ of $E$ that is drawn from $E_p$ is initially fixed but is unknown to us.  Our goal is to find a weighted matching of $\mathcal{E}$ that is close to \matching{\mathcal{E}}, where \matching{.} denotes the maximum weighted matching of its input edge set.\footnote{We may slightly abuse the notation and use \matching{.} to both refer to a maximum weighted matching and its weight.} An algorithm is allowed to \emph{query} the edges in $E$ to determine whether they are in $\mathcal{E}$ or not. 

An {\em adaptive} algorithm proceeds in separate rounds. At each round $i$, based on the results of the queries thus far, the algorithm queries a new subset of the edges $Q_i$ in parallel. The goal is to maximize $\E[\matching{\cup_{i=1}^{R}Q'_i}]$, where $Q'_i$ denotes the edges in $Q_i$ that exist in $\mathcal{E}$, and guarantee that it is close to $\opt := \E[\matching{\mathcal{E}}]$. These expectations are taken over the randomness of the realization $\mathcal{E}$. The total number of rounds that an adaptive algorithm takes is called its \emph{degree of adaptivity}. A {\em non-adaptive} algorithm on the other hand has only one round of adaptivity.

We hope to have algorithms in which the number of per-vertex queries and the degree of adaptivity are both independent of the structure of the input graph such as the edge weights or  $n$.

\subsection{Related Work}
Directly related to the model that we consider in this work are \cite{DBLP:conf/sigecom/BlumDHPSS15, DBLP:conf/sigecom/AssadiKL16, DBLP:conf/sigecom/AssadiKL17, maeharatakanori}, with the only difference that (except for \cite{maeharatakanori}) their results only hold for unweighted graphs. More specifically, Blum \textit{et al.}~\cite{DBLP:conf/sigecom/BlumDHPSS15} give an adaptive algorithm that achieves a $(1-\epsilon)$-approximation by querying $\frac{\log(2/\epsilon)}{p^{2/\epsilon}}$ per-vertex edges, in the same number of rounds. Assadi \textit{et al.}~\cite{DBLP:conf/sigecom/AssadiKL16} improve this result to get rid of the exponential dependence on $1/\epsilon$. In particular, they show that it is possible to achieve the same approximation factor by querying only $O(\log(1/\epsilon p)/\epsilon p)$ edges, with a very similar algorithm. Very recently, Maehara and Yamaguchi~\cite{maeharatakanori}, among some other related problems, considered the stochastic weighted matching problem. They provided a $(1-\epsilon)$-approximation via an adaptive algorithm for general (resp. bipartite) weighted graphs with $O(w \cdot \log{n}/\epsilon p)$ (resp. $O(w/\epsilon p)$) queries per vertex where $w$ is the maximum edge weight assuming that all edge weights are positive integers. See Section~\ref{subsec:comparison} for a comparison of our techniques to the techniques used in these papers.

For non-adaptive algorithms, all of the papers above also obtain a $(1/2-\epsilon)$-approximation via a non-adaptive algorithm. The state-of-the-art non-adaptive algorithm for unweighted graphs is the algorithm of Assadi \textit{et al.}~\cite{DBLP:conf/sigecom/AssadiKL17}, which achieves a slightly better than half approximation for unweighted graphs. 

Stochastic matching has received a lot of attention in the last decade due to its numerous applications. A very well-studied setting is the \emph{query-commit} model that was first introduced by Chen \textit{et al.}~\cite{chen2009approximating}. In the query-commit model, a queried realized edge has to be included in the final matching \cite{adamczyk2011improved, bansal2012lp, chen2009approximating, costello2012stochastic, gupta2013stochastic}. Another related setting is that of \cite{blum2013harnessing}. In this setting, the algorithm is allowed to query at most two incident edges of each vertex and the goal is to find the optimal subset of edges to query.

\section{Our Results and Techniques}
Our main result is the following theorem which is formally given in Section~\ref{sec:adaptive} as Theorem~\ref{thm:adaptive}.

\begin{theorem}[Informal]\label{thm:adaptiveinformal}
	For any given $\epsilon > 0$, there exists a poly-time adaptive algorithm to achieve a $(1-\epsilon)$-approximation of the stochastic weighted matching problem by querying only $O(\frac{1}{\epsilon p^{4/\epsilon}})$ edges per vertex in $O(\frac{1}{\epsilon p^{4/\epsilon}})$ rounds of adaptivity.
\end{theorem}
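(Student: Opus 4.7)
The plan is to follow the sample-and-augment framework from the unweighted stochastic matching literature, but replacing augmenting paths by the weighted \emph{augmenting components} alluded to in the introduction. The algorithm proceeds in $R = O(1/\epsilon)$ rounds. At round $t$ we maintain a matching $M_t \subseteq \mathcal{E}$ of edges already queried and found realized. In round $t+1$ we sample a random subgraph $H_{t+1}$ in which each vertex retains a bounded number $d$ of random incident edges, query all of $H_{t+1}$, and locally search its realized edges for augmenting components with respect to $M_t$: connected alternating subgraphs (paths or short cycles) of length at most $O(1/\epsilon)$ whose non-$M_t$ edges have strictly larger total weight than the $M_t$-edges they touch. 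Flipping a maximum-weight vertex-disjoint collection of such components yields $M_{t+1}$.

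\textbf{Key structural lemma.} The central combinatorial ingredient is that if $w(M_t) \leq (1-\epsilon)\,\opt$, then the realized graph already contains a vertex-disjoint collection of constant-length augmenting components whose cumulative gain closes at least an $\Omega(\epsilon)$ fraction of the gap $\opt - w(M_t)$. To prove this, decompose $M_t \Delta M^*$ (where $M^*$ is the optimum realized matching) into alternating paths and cycles, and then chop each long path into segments of length at most $O(1/\epsilon)$ using a uniformly random starting offset. By linearity, for some offset the segments that form positive-gain augmenting components carry at least a $(1-\epsilon)$ fraction of the weight of $M^* \setminus M_t$. This is the weighted analogue of the classical short-augmenting-path counting argument, and it is presumably what the paper's novel definitions about weighted augmenting components are designed to support.

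\textbf{Sampling and query accounting.} Given the structural lemma, the second ingredient is a sparsification claim: if $H_{t+1}$ retains $d = \Theta(1/p^{2/\epsilon})$ random incident edges per vertex, then with constant probability the realized edges of $H_{t+1}$ contain a $(1-\epsilon)$-fraction (by weight) of the short augmenting components present in $\mathcal{E}$. The factor of $p^{-2/\epsilon}$ reflects the cost of tracing an alternating path of length up to $2/\epsilon$, each of whose edges is independently realized with probability $p$. Running $R = O(1/\epsilon)$ such rounds, each closing an $\Omega(\epsilon)$ fraction of the remaining gap, yields a $(1-\epsilon)$-approximation. The per-vertex query count and the degree of adaptivity are both $R \cdot d = O\!\left(1/(\epsilon\, p^{4/\epsilon})\right)$, matching the bound claimed. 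Polynomial time follows because each round only involves searching bounded-size alternating structures in $H_{t+1} \cup M_t$, a task that reduces to weighted matching on graphs of constant maximum degree.

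\textbf{Main obstacle.} The principal difficulty is establishing the structural lemma for weighted augmenting components. In the unweighted case a direct counting argument produces $\Omega(\epsilon |M^*|)$ augmenting paths of length $O(1/\epsilon)$ whenever $|M_t| < (1-\epsilon)|M^*|$. In the weighted case this counting breaks down: heavy edges can be concentrated in long alternating paths whose arbitrary short sub-segments may have negative gain, so one cannot simply truncate. Handling this requires a more delicate random-offset or averaging argument over segmentations of each alternating path that simultaneously controls the weight lost at cut points and guarantees that the surviving segments are individually augmenting. Once this weighted short-augmenting-component lemma is in hand, the sampling step and the round-by-round progress argument follow the standard template of the unweighted analyses.
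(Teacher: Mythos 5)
Your plan captures the high-level framework correctly (decompose $M_t \Delta M^*$ into alternating components, chop long ones, argue a non-negligible fraction of the gain survives chopping), and you correctly identify the central obstacle: after an arbitrary chop, sub-segments of a weighted alternating path need not have positive gain. However, you flag that obstacle without resolving it, and the specific resolution you gesture at---a uniformly random offset so that the positive-gain segments carry a $(1-\epsilon)$ fraction of the weight of $M^* \setminus M_t$---is in fact false. Consider a long alternating path in which every edge has weight $1$ except a single augmenting edge of weight $1+\delta$ for tiny $\delta$: then $\Delta_C = \delta$ while $w(Q_C) \approx n/2$, and no chopping with bounded-length segments can recover more than $O(\delta)$ of gain, which is a vanishing fraction of $w(Q_C)$. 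The missing idea is that the right parameter is not the raw length of $C$ but its \emph{normalized length} $L_C := w(Q_C)/\Delta_C$. The paper first shows (Lemma~\ref{lem:shortpaths}) that components with $L_C \geq 2/\epsilon$ contribute at most $\tfrac{\epsilon}{2} w(M_r)$ to the total gain, purely by the algebraic identity $\Delta_C = w(Q_C)/L_C$ and the fact that $\sum_C w(Q_C) \leq w(M_r)$; hence components with $L_C < 2/\epsilon$ carry $\geq \tfrac{\epsilon}{2}w(M_r)$. Then, for each such $C$ individually (Lemma~\ref{lem:augpathaddition}), one chops its augmenting edges into $\alpha := 2L_C$ interleaved residue classes $D_1, \dots, D_\alpha$; by averaging some $D_i$ has weight $\leq w(Q_C)/\alpha = \Delta_C/2$, and deleting it leaves sub-components each with at most $\alpha$ augmenting edges and total residual gain $\geq \Delta_C - \Delta_C/2 = \Delta_C/2$. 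This is where normalized length is essential: the deleted weight is controlled in terms of $\Delta_C$, not in terms of $w(Q_C)$, which is exactly what defeats the counterexample above.

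Two further discrepancies with the paper's actual construction. First, the algorithm: you propose random sparsified subgraphs $H_{t+1}$ with $\Theta(1/p^{2/\epsilon})$ edges per vertex, which forces you into an unproved sparsification claim. The paper instead queries a single maximum weighted matching per round---one edge per vertex---and runs $R = O\bigl(\tfrac{1}{\epsilon p^{4/\epsilon}}\bigr)$ rounds, sidestepping any sparsification argument entirely since the augmenting edges are by definition contained in the matching being queried. Second, your arithmetic does not close: $R \cdot d = O(1/\epsilon) \cdot \Theta(1/p^{2/\epsilon}) = O\bigl(1/(\epsilon p^{2/\epsilon})\bigr)$, not $O\bigl(1/(\epsilon p^{4/\epsilon})\bigr)$; in the paper the exponent $4/\epsilon$ arises because a component with $L_C \leq 2/\epsilon$ requires up to $\alpha = 2L_C \leq 4/\epsilon$ augmenting edges to be simultaneously realized, giving a per-round progress of $\Omega(\epsilon p^{4/\epsilon}\opt)$ and hence $O\bigl(\tfrac{1}{\epsilon p^{4/\epsilon}}\bigr)$ rounds.
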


We further show that one can obtain a $(0.5-\epsilon)$-approximation via a non-adaptive algorithm with the same number of per-vertex queries.

\begin{theorem}[Informal]\label{thm:nonadaptiveinformal}
	For any given $\epsilon > 0$, there exists a poly-time non-adaptive algorithm to achieve a $(0.5-\epsilon)$-approximation of the stochastic weighted matching problem by querying only $O(\frac{1}{\epsilon p^{4/\epsilon}})$ edges per vertex.
\end{theorem}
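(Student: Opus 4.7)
The plan is to derive Theorem~\ref{thm:nonadaptiveinformal} from Theorem~\ref{thm:adaptiveinformal} by losing a factor of $2$ in the approximation guarantee, paralleling the classical reduction from adaptive to non-adaptive matching algorithms in the unweighted stochastic setting. The intuition is that in the non-adaptive setting we are forced to commit to a single subgraph $H$ and report the best matching in $H_p := H \cap \mathcal{E}$; in particular we cannot use realized edges as feedback to chase augmenting structures across rounds, which in the weighted world corresponds precisely to the well-known factor-$2$ gap between an arbitrary ``locally maximal'' weighted matching and $\opt$.

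Concretely, I would construct the non-adaptive subgraph $H$ by reusing the machinery of Theorem~\ref{thm:adaptiveinformal}. One natural choice is to run the adaptive algorithm from Theorem~\ref{thm:adaptiveinformal} on a simulated realization $\mathcal{E}'$ drawn independently from $E_p$ (equivalently, pretend every probe succeeds, or simulate the realizations internally), and let $H$ be the union of all edges that the adaptive algorithm would have queried. Since each execution has per-vertex query budget $O(1/(\epsilon p^{4/\epsilon}))$ across $O(1/(\epsilon p^{4/\epsilon}))$ rounds, the resulting $H$ still has maximum degree $O(1/(\epsilon p^{4/\epsilon}))$. The non-adaptive algorithm then queries all of $H$ in a single round and outputs $\matching{H_p}$.

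For the analysis, the guarantee of Theorem~\ref{thm:adaptiveinformal} applied to the simulated run gives that $H$ has the closure property
\[
\E_{\mathcal{E}' }[\matching{H}] \;\geq\; (1-\epsilon)\,\opt,
\]
i.e.\ the maximum weighted matching contained in $H$ itself is already a $(1-\epsilon)$-approximation of $\opt$ in expectation. It therefore suffices to prove the ``one-shot'' inequality
\[
\E_{\mathcal{E}}\!\bigl[\matching{H_p}\bigr] \;\geq\; \tfrac{1}{2}\,\matching{H},
\]
since plugging the two together and rescaling $\epsilon$ by a factor of $2$ yields the $(1/2-\epsilon)$-approximation claimed by Theorem~\ref{thm:nonadaptiveinformal}. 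Intuitively, this inequality says that a max weighted matching in a $p$-sample of $H$ captures at least half the weight of a max weighted matching in $H$ itself—the natural weighted analogue of the half-approximation that a single sampled round gives in the unweighted case.

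The main obstacle, and the place where the paper's novel contributions genuinely enter, is proving this last inequality in the weighted setting. In the unweighted case it follows immediately because any maximal matching is a $1/2$-approximation of maximum matching. In the weighted case the analogous statement has to be established via an exchange argument: we decompose the symmetric difference between $\matching{H_p}$ and $\matching{H}$ into paths and cycles and charge each edge of $\matching{H}$ that is unrealized to incident edges of $\matching{H_p}$, arguing that the weighted swap can lose at most a factor of $2$. This is exactly the weighted augmenting-component framework the paper advertises; I would invoke its structural lemmas to bound, edge-by-edge along each alternating component, the weight lost to non-realizations, giving the desired $1/2$-bound. Once that is in hand, the theorem follows by combining the two displays above and absorbing the factor into $\epsilon$.
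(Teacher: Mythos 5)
Your proposed reduction breaks down at the ``one-shot'' inequality, $\E_{\mathcal{E}}\!\left[\matching{H_p}\right] \geq \tfrac{1}{2}\matching{H}$, which is simply false for a general subgraph $H$. Take $H$ to be a single edge of weight~$1$, or any matching: then $\matching{H}=w(H)$ while $\E\!\left[\matching{H_p}\right]=p\cdot w(H)$, so the ratio is $p$, not $1/2$, and $p$ can be an arbitrarily small constant. The intuition you invoke (``a maximal weighted matching is a half-approximation of a maximum one'') concerns a different comparison: $\matching{H_p}$ is the maximum matching of the \emph{realized} subset of $H$, not a maximal matching of $H$, and there is no exchange/charging argument that avoids the factor-$p$ loss. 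The situation is not rescued by taking $H$ to be the query set of a simulated adaptive run: if $\mathcal{E}'$ happens to realize all of $M_1$, Algorithm~\ref{alg:adaptive} picks the same matching every round and $H$ is exactly one matching, reproducing the counterexample. The underlying issue is that $\matching{H}$ is the wrong yardstick --- it can wildly exceed $\opt$ (by a factor $1/p$, as in the single-edge example) --- so even the composition of your two displayed inequalities is not a clean route to a multiplicative guarantee against $\opt$.

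The paper's proof is different in both the algorithm and the analysis, and it avoids any ``one-shot'' step. Algorithm~\ref{alg:nonadaptive} is deterministic: it repeatedly extracts a maximum-weight matching $M_r$ from the residual graph, \emph{removes} $M_r$'s edges, and sets $H=\cup_{r=1}^R M_r$. The new ingredient replacing Observation~\ref{obs:standardforadaptive} is Lemma~\ref{lem:nextishalf}, which follows from the subadditivity property $w(\matching{E_1})+w(\matching{E_2})\geq w(\matching{E_1\cup E_2})$ (Observation~\ref{obs:gooli}): as long as $\expmatching{H_{r-1}}<\expmatching{E}/2$, the complement $E\setminus H_{r-1}$ still contains a matching of weight at least $\expmatching{E}/2$, hence $w(M_r)\geq \expmatching{E}/2$. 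The augmenting-component Lemmas~\ref{lem:augpathaddition} and~\ref{lem:shortpaths} are then reapplied verbatim (the paper emphasizes they never used adaptivity) with the threshold $\expmatching{E}/2$ in place of $\opt$, giving an expected per-round gain of $\Omega(\epsilon p^{4/\epsilon}\expmatching{E})$ until a $(1/2-\epsilon)$-approximation is reached. The factor $1/2$ thus comes from the weaker guarantee $w(M_r)\geq\expmatching{E}/2$ (versus $w(M_r)\geq\opt$ in the adaptive case, where removed edges may be re-picked if they realize), not from a per-realization comparison of $\matching{H_p}$ against $\matching{H}$.
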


Here we focus on the adaptive algorithm and explain the main technical ingredients used in proving Theorem~\ref{thm:adaptiveinformal}. In brief, our algorithm is as follows: take the maximum weighted matching in $G$, query its edges, and remove from $G$ the queried edges that are known to be unrealized. Repeat this for $R$ (a parameter to be determined later) rounds; then report the maximum weighted matching among the realized queried edges.

Inspired by the analysis of \cite{DBLP:conf/sigecom/BlumDHPSS15} for the unweighted variant of the problem, we first provide a proof sketch to show that this algorithm achieves a $(1-\epsilon)$-approximation for $R = O(1)$, if all the edge weights are the same (which is equivalent to the case of unweighted graphs). We then focus on challenges in generalizing this approach to the general weighted case and describe the intuitions on how we overcome them.

The idea is to maintain a {\em realized matching} and iteratively augment it at each round until its size becomes as large as $(1-\epsilon)\opt$. Let us denote by $O_r$ the maximum realized matching among the queried edges by round $r$. Note that if $|O_{r-1}| < (1-\epsilon)\opt$, the next matching that we pick (which is definitely of size at least $\opt$) augments $O_{r-1}$ by \emph{many} vertex disjoint augmenting paths of size at most $O(1/\epsilon)$. Any of these augmenting paths increases the matching size only if all of its edges are realized. However, since the length of each of the augmenting paths is at most $O(1/\epsilon)$, the probability that all of the edges of any of the augmenting path is realized is at least $p^{O(1/\epsilon)}$. Hence roughly after $1/p^{O(1/\epsilon)}$ rounds, we achieve a $(1-\epsilon)$-approximation in expectation.

This proof contains two canonical parts: (i) there are \emph{many} vertex disjoint augmenting paths; (ii) each of them has a relatively \emph{small length}. Roughly speaking, these claims are correct since in unweighted graphs, each augmenting path increases the size of a matching by at most one edge. In fact, for weighted graphs, one may give an example that refutes both of these assumptions at the same time.

This difference, is perhaps, one of the main reasons that in the literature of the matching problem, augmenting paths have extensively been considered for the unweighted graphs, where in contrast, they have rarely been useful when the graph under consideration is weighted. In this work, we propose a novel direction to generalize augmenting paths to weighted graphs that may be of independent interest.

Consider two weighted matchings $M_L$ and $M_H$ of the same graph with $w(M_L) < (1-\epsilon)w(M_H)$. Further consider the graph $M_H \Delta M_L := (M_H \cup M_L) - (M_H \cap M_L)$ which is a set of connected components that are either paths or cycles since the degree of each vertex in it is at most 2. We call each of these connected components an {\em alternating component}. Moreover, we call each alternating component an {\em augmenting component} if the total weight of edges of $M_H$ in it is more than that of $M_L$.

For any alternating component $C$ in $M_H \Delta M_L$, we denote by $\Delta_C$ the amount of weight that is added to the base matching $M_L$ after we swap the edges of $M_L$ in $C$ with that of $M_H$ (we call this process {\em augmenting} $C$). Thus, we have $\sum_{C \in M_H \Delta M_L}\Delta_C = w(M_H) - w(M_L)$. We refer to $\Delta_C$ as the value of alternating component $C$. Hence, an augmenting component is an alternating component with a positive value. We further define the \emph{augmenting edges} of $C$, denoted by $Q_C$, to be the set of edges in $C$ that are in $M_H$. Moreover, we define the \emph{normalized length} of $C$, denoted by $L_C$, to be $w(Q_C)/\Delta_C$ where $w(Q_C)$ denotes the total weight of the edges in $Q_C$. 

In the stochastic matching context, $M_L$ will correspond to our maintained realized matching and $M_H$ will correspond to the matching that we pick in the next iteration of the algorithm. The main theorem of this paper is obtained by coupling the following two lemmas:

\begin{enumerate}
	\item Querying the edges of an augmenting component $C$, increases the expected weight of the maximum realized matching by at least $\Omega(p^{O(L_C)}\Delta_C)$ in expectation. (See Lemma~\ref{lem:augpathaddition} for the formal statement.)
	\item There is always a set $S$ of \emph{high impact} (i.e., $\sum_{C \in S}\Delta_C$ is sufficiently large) vertex-disjoint augmenting components of {\em small} normalized length. (See Lemma~\ref{lem:shortpaths} for the formal statement.)
\end{enumerate}

\paragraph{Intuition behind (1).} Fix an augmenting component $C$. Denote the actual length (i.e., the number of edges) of $C$ by $k$. If $k \leq O(L_C)$ the proof of the first lemma is trivial. Indeed, all the edges of $C$ are realized with probability $p^k$ and if they are all realized, $C$, after augmentation, adds $\Delta_C$ to the weight of the realized matching that we maintain. Hence in expectation $p^k\Delta_C$ ($\geq p^{O(L_C)}\Delta_C$) is added to the weight of our realized matching as desired. Roughly speaking, for the case when $k$ is much larger than $L_C$, we cannot \emph{afford} the probability by which the whole augmenting component is realized. Instead, we decompose the component into a set of vertex disjoint {\em sub-components} by removing some of its non-queried edges. This results in having smaller components that have a higher chance of being completely realized. We then argue that the amount of weight that these smaller components add to the matching is larger than $\Omega(p^{O(L_C)}\Delta_C)$ even considering the weight of the removed edges.

\paragraph{Intuition behind (2).} The proof of this lemma is mainly based on the assumption that $w(M_L) < (1-\epsilon)w(M_H)$. If the normalized length $L_C = w(Q_C)/\Delta_C$ of an augmenting component $C$ is large, then its augmenting edges $Q_C$ should have a large total weight while its value $\Delta_C$ should be relatively small. This implies, intuitively, that a large portion of the edges in the heavier matching $M_H$ are {\em wasted} without augmenting the base matching $M_L$ by a large enough weight. Nonetheless, since we have $w(M_L) < (1-\epsilon)w(M_H)$, we can argue that this cannot happen for an arbitrarily large portion (based on their weights) of augmenting components.

\subsection{Comparison of Techniques with Prior Works}\label{subsec:comparison}
The most relevant paper to our work, in terms of the techniques that are used, is that of \cite{DBLP:conf/sigecom/BlumDHPSS15} which only considers the unweighted version of the problem and the challenges in generalizing their analysis to the weighted case were discussed above. In summary, to achieve our results, we focus on properties of weighted augmenting paths. An important step is to relax the requirement that every edge in an augmenting path should be realized for it to be useful. We then show that it is possible to get $\epsilon$-close to the optimum weighted matching by constant per-vertex queries.

Assadi \textit{et al.}~\cite{DBLP:conf/sigecom/AssadiKL16}, improve the result of \cite{DBLP:conf/sigecom/BlumDHPSS15} by achieving the same approximation factor while querying only $O(\log(1/\epsilon p)/\epsilon p)$ edges per vertex. Their ideas are mainly based on the intuition that even if parts of augmenting paths are realized, one can ``patch" them together to create new augmenting paths and use Tutte-Berg to formalize this. Unfortunately, no equivalent generalization of Tutte-Berg formula is known for weighted graphs and thus our techniques are inherently different. However, our analysis is also based on the fact that parts of the edges in an augmenting path may be enough to obtain a better weighted matching.

Maehara and Yamaguchi~\cite{maeharatakanori} gave the only known approximation on weighted graphs prior to our work. They show that it is possible to get a $(1-\epsilon)$-approximation for general (resp. bipartite) weighted graphs with $O(w \log{n})$ (resp. $O(w)$) queries per vertex where $w$ denotes the maximum edge weight assuming that all edge weights are positive integers. Their results are based on a general integer programming framework. More specifically, they formulate the stochastic matching problem by an integer program and use LP relaxations of maximum weighted matching on bipartite and general graphs to obtain their approximations. The dependence of the number of per-vertex queries on the edge weights' range ($w$) is intrinsically part of their framework, hence both of their approximations for bipartite and general graphs depend on $w$.

As an interesting side note, apart from minor differences, the descriptions of the algorithms proposed for the stochastic matching problem in the literature are all simple and similar, and the differences only appear in the analysis \cite{DBLP:conf/sigecom/BlumDHPSS15, DBLP:conf/sigecom/AssadiKL16, DBLP:conf/sigecom/AssadiKL17, maeharatakanori}.

\section{$(1-\epsilon)$-Approximation via an Adaptive Algorithm}\label{sec:adaptive}
\newcommand{\M}[1]{\ensuremath{M_{#1}}}
\newcommand{\Mt}[1]{\ensuremath{M^T_{#1}}}
\newcommand{\Mf}[1]{\ensuremath{M^F_{#1}}}
\newcommand{\Estar}[0]{\ensuremath{E^*}}

\newcommand{\baseweight}[1]{\ensuremath{b_{#1}}}
\newcommand{\length}[1]{\ensuremath{\ell_{#1}}}

In this section we give an adaptive algorithm that returns a matching of the realized graph whose expected weight is a $(1-\epsilon)$-approximation of the optimum solution for any arbitrarily small constant $\epsilon$. The number of per-vertex queries, as well as the degree of adaptivity of our algorithm is $O(1)$ (it only depends on $\epsilon$ and $p$ --- and not properties of the input graph such as $n$ or the edge weights).

The algorithm, formally given as Algorithm~\ref{alg:adaptive}, is simple: In each round, among the edges that are either not queried yet or are queried and realized, it finds a maximum weighted matching and queries its edges. Finally after a sufficiently large (but constant) number of rounds, it reports the maximum weighted matching among the realized edges that had been queried.

\begin{algorithm}
  \caption{Adaptive algorithm for weighted stochastic matching: $(1-\epsilon)$-approximation.}
  \label{alg:adaptive}
  \begin{algorithmic}[1]
  	\Statex \textbf{Input:} Input graph $G=(V, E)$.
  	\Statex \textbf{Parameter:} $R$.
  	\State $\Estar{} \gets E$
	\For{$r = 1, \ldots, R$}
		\State Find a maximum weighted matching \M{r} in $(V, \Estar{})$.
		\State Query the edges in \M{r} and remove its non-realized edges from \Estar{}.
	\EndFor
	\State \Return{a maximum weighted matching among realized edges in $\cup_{r=1}^{R}\M{r}$.}
  \end{algorithmic}
\end{algorithm}

Note that $R$ is an upper bound on the per vertex queries and the degree of adaptivity of Algorithm~\ref{alg:adaptive}. The main focus of this section is to show that for a constant $R$, the weight of the matching that Algorithm~\ref{alg:adaptive} returns is in expectation a $(1-\epsilon)$-approximation of the omniscient optimum. Theorem~\ref{thm:adaptive}, which is the formal restatement of Theorem~\ref{thm:adaptiveinformal}, captures this. 

\begin{theorem}\label{thm:adaptive}
	For any graph $G=(V, E)$, and any arbitrarily small constant $\epsilon > 0$, Algorithm~\ref{alg:adaptive} returns a matching whose expected weight is at least $(1-\epsilon)\opt$ for $R = O(\frac{1}{\epsilon p^{4/\epsilon}})$.
\end{theorem}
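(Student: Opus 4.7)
The plan is to track, for each round $r$, the random variable $O_r$ denoting a maximum weight realized matching among all edges queried by the end of round $r$, and to prove a per-round progress lemma: whenever $w(O_r)$ is still more than $\epsilon\cdot\opt$ below the omniscient optimum, round $r+1$ adds in conditional expectation at least $\Omega(\epsilon\, p^{O(1/\epsilon)})\cdot \opt$ to $w(O_r)$. Iterating $R = O(1/(\epsilon\,p^{4/\epsilon}))$ rounds then forces $\mathbb{E}[w(O_R)]\geq (1-\epsilon)\opt$, and the matching the algorithm returns is precisely a maximum weighted matching among realized queried edges, which equals $w(O_R)$.

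The core per-round argument runs as follows. Condition on the entire history through round $r$, so that $E^*$, $O_r$, and the deterministically-chosen $\M{r+1}$ are all fixed. Because edges are removed from $E^*$ only when known to be unrealized, every realized edge of $G$ remains in $E^*$, hence pointwise
\[
w(\M{r+1}) \;\geq\; w(\matching{\mathcal{E}}),
\qquad
w(\M{r+1}) \;\geq\; w(O_r).
\]
Form the symmetric difference $\M{r+1}\Delta\, O_r$ and set $M_H = \M{r+1}$, $M_L = O_r$ in Lemma~\ref{lem:shortpaths}. On the event that $w(O_r)< (1-\epsilon)\,w(\matching{\mathcal{E}})$ (the only case where progress is still needed), this lemma produces a vertex-disjoint family $S$ of augmenting components each of normalized length at most $L^\ast = O(1/\epsilon)$ whose total value $\sum_{C\in S}\Delta_C$ is $\Omega\!\big(\epsilon\cdot w(\M{r+1})\big)$. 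Round $r+1$ queries every edge of $\M{r+1}$, and in particular every augmenting edge of every $C\in S$. Lemma~\ref{lem:augpathaddition} then yields, for each $C$, an expected gain of $\Omega(p^{O(L^\ast)}\,\Delta_C)$ in the weight of the realized matching; vertex-disjointness and linearity of expectation let these contributions be summed, giving
\[
\mathbb{E}\bigl[w(O_{r+1}) - w(O_r)\,\bigm|\,\text{past}\bigr]
\;\geq\;
\Omega\!\bigl(\epsilon\, p^{O(1/\epsilon)}\bigr)\cdot w(\M{r+1})
\;\geq\;
\Omega\!\bigl(\epsilon\, p^{O(1/\epsilon)}\bigr)\cdot w(\matching{\mathcal{E}}).
\]

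To close the loop, track the expected gap $\mathbb{E}[\,w(\matching{\mathcal{E}})-w(O_r)\,]$. The trajectories in which $w(O_r)\geq (1-\epsilon)w(\matching{\mathcal{E}})$ are already successful (this event is monotone in $r$ because $w(O_r)$ is non-decreasing and $w(\matching{\mathcal{E}})$ is fixed once the realization is drawn), while the remaining trajectories contribute at least the additive gain above in every round. Since the total possible gain over all rounds is bounded by $\opt$, after $R = O\!\bigl(1/(\epsilon\, p^{O(1/\epsilon)})\bigr)$ rounds the expected residual gap must drop below $\epsilon\cdot\opt$; tuning the constant inside Lemma~\ref{lem:augpathaddition}'s exponent so that $L^\ast$ yields $p^{4/\epsilon}$ gives the stated bound $R = O(1/(\epsilon\, p^{4/\epsilon}))$.

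The hard part is bridging the structural lemmas, which compare two fixed matchings, to the stochastic execution whose $O_r$ and $\M{r+1}$ are themselves random. Concretely, one has to verify two bookkeeping points: that Lemma~\ref{lem:shortpaths}'s dichotomy is applied conditionally on the history so that $M_H$ and $M_L$ are deterministic inside the conditional expectation, and that the per-component gains of Lemma~\ref{lem:augpathaddition} actually add up when the components are queried in parallel in a single round (this is where vertex-disjointness is essential: the realizations of the edges inside different components are independent conditional on the history, and each component's contribution is a local function of its own edges). With these two points handled, the recursion is straightforward and the round count follows from a geometric or additive accounting of the expected gap.
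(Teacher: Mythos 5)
Your proposal follows the paper's own proof essentially step by step: you invoke the same pointwise observation that $w(M_{r+1})\geq w(\matching{\mathcal{E}})$, apply Lemma~\ref{lem:shortpaths} to extract a high-value family of short augmenting components, then apply Lemma~\ref{lem:augpathaddition} component-wise and sum via vertex-disjointness, and finally close by an additive potential argument over $R=O(1/(\epsilon p^{4/\epsilon}))$ rounds. The one difference is that you make explicit the conditioning on the history so that $M_{r+1}$, $O_r$ are deterministic inside the per-round conditional expectation---a bookkeeping point the paper glosses over---but the argument and the bound are the same.
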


We start by a set of definitions for weighted matchings in Section~\ref{sec:weighteddefs} and then proceed to prove Theorem~\ref{thm:adaptive} in Section~\ref{sec:adaptiveanalysis}

\subsection{Tools for Analyzing Weighted Matchings}\label{sec:weighteddefs}
Consider two weighted matchings $M_L$ and $M_H$ and assume that the total weight of matching $M_H$ is larger than that of $M_L$, i.e., $w(M_L) < w(M_H)$ where for any $E' \subseteq E$, we use $w(E') := \sum_{e\in E'}w_e$ to denote the total weight of edges in $E'$. Our goal in this section is to define a set of tools, among which, the most important is the notion of {\em augmenting components} which generalizes augmenting paths to weighted graphs.

Let us denote by $\diff{A}{B} := (A \cup B) - (A \cap B)$ the symmetric difference of two sets $A$ and $B$.
	Consider the graph $D := M_L\Delta M_H$ of edges that appear in exactly one of $M_L$ and $M_H$. Since the degree of each vertex in $D$ is at most 2 (due to the fact that both $M_1$ and $M_2$ are matchings), $D$ is a collection of vertex disjoint paths and cycles. We start by defining alternating components.
	
\begin{definition}[Alternating components]
	We call any connected component of $M_L \Delta M_H$ an {\em alternating component}. For an alternating component $C$, we define $B_C := C \cap M_L$ to be its edges in $M_L$ and define $Q_C := C \cap M_H$ to be its edges in $M_H$. We refer to $B_C$ as the {\em base} edges of component $C$ and refer to $Q_C$ as the {\em augmenting} edges of $C$.
\end{definition}

Note that alternating components in weighted graphs can be either cycles or paths, hence we refer to them as {\em components} instead of paths which is more common for cardinality matching in unweighted graphs. Next, we define a set of properties of these alternating components (see Figure~\ref{fig:augpath}).

\begin{definition}\label{def:augpathprops}
For any alternating component $C$:
\begin{itemize}
	\item We define the {\em value} $\Delta_C$ of $C$ to be $w(Q_C) - w(B_C)$.
	\item We define the {\em normalized length} $L_C$ of $C$ to be $w(Q_C)/\Delta_C$.
\end{itemize}
\end{definition}

Observe that by definition, the value of an alternating component may even be negative. We use the term {\em augmenting components} to refer to any alternating component $C$ with $\Delta_C > 0$. Furthermore, we also use the term {\em augmenting} a component $C$ to denote the process of updating the matching $M_L$ to be $M_L \backslash B_C \cup Q_C$ which clearly is a valid matching since all components are vertex disjoint. 

\begin{figure}[hbt]
  \centering
  \includegraphics[scale=0.7]{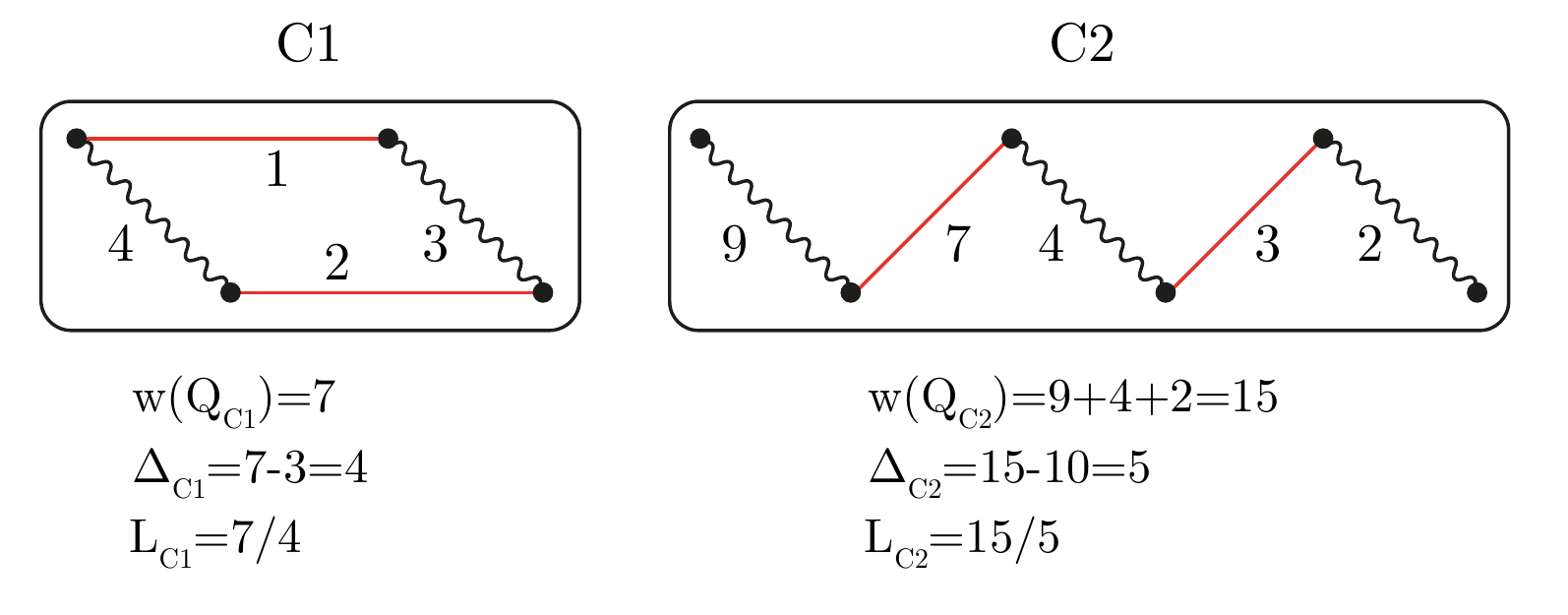}
  \caption{Two augmenting components and their properties. The wavy (black) edges are the edges in $M_H$. The solid (red) edges are the edges in $M_L$.}
  \label{fig:augpath}
\end{figure}

\subsection{Analysis}\label{sec:adaptiveanalysis}
Let \Mt{r} (resp. \Mf{r}) denote the edges in \M{r} that are realized (resp., are not realized). Furthermore, denote  by $O_r := \matching{\cup_{i=1}^{r}\Mt{r}}$ the maximum realized matching found by round $r$ (i.e., $O_R$ is what Algorithm~\ref{alg:adaptive} outputs). Let us denote by $\opt$ the weight of the omniscient optimum maximum weight matching. Our goal is to show that for any round $r-1$ with $w(O_{r-1}) < (1-\epsilon)\opt$, the matching $\M{r}$ picked in the next round of Algorithm~\ref{alg:adaptive} augments $O_{r-1}$ by a constant fraction of $\opt$ and use this to show that it takes only a constant number of rounds to obtain a $(1-\epsilon)$-approximation. We start by the following standard observation.

\begin{observation}\label{obs:standardforadaptive}
	For any $r \geq 1$, we have $w(\M{r}) \geq \opt$.
\end{observation}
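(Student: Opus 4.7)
The plan is to establish and then apply the invariant that throughout Algorithm~\ref{alg:adaptive} the working set \Estar{} always contains every realized edge: $\mathcal{E} \subseteq \Estar{}$ at the beginning of every round $r$. This is a one-line induction on $r$: initially $\Estar{} = E \supseteq \mathcal{E}$, and the only edges ever removed from \Estar{} are queried edges that are discovered to be \emph{unrealized}, so no edge of $\mathcal{E}$ is ever deleted.

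Given the invariant, the conclusion is immediate. At the start of round $r$, the matching $\matching{\mathcal{E}}$ uses only edges of $\mathcal{E} \subseteq \Estar{}$ and is therefore a valid matching of the graph $(V,\Estar{})$ in which \M{r} is chosen. Since \M{r} is by construction a maximum weighted matching of $(V,\Estar{})$, we obtain the pointwise bound $w(\M{r}) \geq w(\matching{\mathcal{E}})$, which holds for every realization and every outcome of the previous rounds' queries. Taking expectations over the realization yields $\E[w(\M{r})] \geq \E[w(\matching{\mathcal{E}})] = \opt$, matching the stated inequality.

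I do not expect any real obstacle here: the entire content of the observation is the monotonicity of \Estar{} with respect to the realized edge set, and this monotonicity is baked into the description of the algorithm. This is presumably why the paper labels the claim ``standard''; its role in the sequel is merely to legitimize treating \M{r} as a reference matching of weight at least $\opt$, which can subsequently be compared against the current realized matching $O_{r-1}$ via the augmenting-component machinery of Section~\ref{sec:weighteddefs}.
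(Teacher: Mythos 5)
Your proof is correct and follows the same route as the paper's: both hinge on the invariant that $\Estar$ never loses a realized edge, so $\matching{\mathcal{E}}$ remains a feasible matching of $(V,\Estar)$ and hence $w(\M{r}) \ge w(\matching{\mathcal{E}})$. You are somewhat more careful than the paper in distinguishing the pointwise inequality $w(\M{r}) \ge w(\matching{\mathcal{E}})$ from the expectation statement involving $\opt = \E[\matching{\mathcal{E}}]$ (the paper's phrasing blurs this), but the underlying argument is the same.
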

\begin{proof}
	Matching $\M{r}$ is the maximum weighted matching over a set of edges that is a super set of the edges in the realization, hence $w(\M{r})$ is larger than the weight of $\opt$ which is the maximum weighted matching over the realized edges. 
\end{proof}

Consider the two matchings \M{r} and $O_{r-1}$ and define $U_r$ to be the set of augmenting\footnote{It is important that $U_r$ is the set of {\em augmenting} components in $\M{r}\Delta O_{r-1}$, and not all of its alternating components.} components in $\M{r}\Delta O_{r-1}$. We couple the following two lemmas to prove Theorem~\ref{thm:adaptive}.

\begin{lemma}\label{lem:augpathaddition}
Querying the edges of each augmenting component $C$ in $U_r$ increases the expected matching size by at least $p^{2 L_C}\Delta_C/2$.	
\end{lemma}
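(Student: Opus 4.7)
The plan is to split the proof into two cases based on how $k := |Q_C|$, the number of augmenting edges of $C$, compares to the normalized length $L_C = w(Q_C)/\Delta_C$. Throughout, the base edges $B_C \subseteq O_{r-1}$ are already realized, so only the augmenting edges $Q_C \subseteq \M{r}$ contribute fresh randomness when we query $\M{r}$. I will bound the gain $w(O_r) - w(O_{r-1})$ from below by the weight of an explicit feasible augmentation, which suffices since the algorithm outputs a maximum weight matching.

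\textbf{Short components ($k \le 2L_C$).} All $k$ augmenting edges of $C$ are realized independently with probability at least $p^k \ge p^{2L_C}$. Conditioned on this event, swapping $B_C$ for $Q_C$ inside $O_{r-1}$ produces a valid realized matching of weight $w(O_{r-1})+\Delta_C$, so the expected gain from $C$ is at least $p^{2L_C}\Delta_C \ge p^{2L_C}\Delta_C/2$.

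\textbf{Long components ($k > 2L_C$).} Here the probability of full realization is too small, so I plan to decompose $C$ into vertex-disjoint sub-components by removing a set $R \subseteq Q_C$ of augmenting edges chosen so that (i) each sub-component has at most $2L_C$ augmenting edges, and (ii) $w(R)\le \Delta_C/2$. Index the augmenting edges $q_1,\ldots,q_k$ along $C$, set $t=\lfloor 2L_C\rfloor+1$, and consider the $t$ candidate cut sets $R_s=\{q_i : i\equiv s \pmod t\}$. Each $q_i$ is in exactly one $R_s$, so $\sum_{s} w(R_s)=w(Q_C)$, hence some $R_s$ satisfies $w(R_s)\le w(Q_C)/t \le w(Q_C)/(2L_C)=\Delta_C/2$. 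Removing the edges of this $R_s$ breaks $C$ into sub-components $C_1,\ldots,C_m$, each alternating, each with at most $t-1\le 2L_C$ augmenting edges, and with $\sum_i \Delta_{C_i}=\Delta_C-w(R_s)\ge \Delta_C/2$.

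Then I lower-bound the algorithm's gain by the following explicit strategy: starting from $O_{r-1}$, for every $i$ such that (a) all augmenting edges of $C_i$ are realized and (b) $\Delta_{C_i}>0$, perform the local swap $B_{C_i}\to Q_{C_i}$. Vertex-disjointness of the sub-components (they are pieces of a single connected component of $\M{r}\Delta O_{r-1}$) makes all these swaps simultaneously feasible, and every endpoint vacated at a cut is simply left unmatched. The expected weight contributed by sub-component $C_i$ is $p^{|Q_{C_i}|}\max(0,\Delta_{C_i})\ge p^{2L_C}\max(0,\Delta_{C_i})$, and summing yields
\[
\sum_{i} p^{2L_C}\max(0,\Delta_{C_i})\;\ge\; p^{2L_C}\sum_i \Delta_{C_i}\;\ge\; p^{2L_C}\Delta_C/2,
\]
which is exactly the desired bound.

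The main obstacle is the long-component case: the decomposition must balance a \emph{probability cost} (sub-components of length $\gg L_C$ are unlikely to realize) against a \emph{value cost} (removing too many augmenting edges destroys $\Delta_C$). The random-shift averaging above is what lets me pay $\Delta_C/2$ to force sub-component length $\le 2L_C$, making the two costs meet at the exponent $2L_C$. Two minor technical points will also need handling: (1) when two removed $q_i$'s are adjacent, a ``sub-component'' is a lone base edge with $\Delta_{C_i}<0$, which the $\max(0,\cdot)$ opt-in discards harmlessly; and (2) some augmenting edges of $C$ may have been realized in earlier rounds, but then their probability of being realized is $1\ge p$, so $p^{2L_C}$ remains a valid lower bound.
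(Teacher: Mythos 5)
Your proof matches the paper's essentially line for line: both partition $Q_C$ into residue classes modulo $\approx 2L_C$, choose the lightest class as the cut set by averaging, decompose $C$ into vertex-disjoint sub-components each of whose augmenting edges realize simultaneously with probability at least $p^{2L_C}$, and bound the surviving total value by $\Delta_C/2$. The places where you are more explicit than the paper --- the short/long case split, $t=\lfloor 2L_C\rfloor+1$ rather than assuming $2L_C\in\mathbb{Z}$, and the $\max(0,\cdot)$ opt-in for sub-components of nonpositive value --- tighten points the paper leaves implicit but do not change the argument; the opt-in is genuinely needed, since after cutting, a sub-component can be a lone base edge with $\Delta_{C_j}<0$, and the paper's inequality survives only because $0\ge p^{2L_C}\Delta_{C_j}$ when $\Delta_{C_j}\le 0$. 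One subtlety both you and the paper skip: when $C$ is a \emph{cycle}, removing $R_s$ leaves a wrap-around arc joining the tail after the last removed $q$ to the head before the first removed $q$, and this arc can carry up to $2(t-1)$ augmenting edges rather than $t-1$ (e.g., a cycle with $k=5$ augmenting edges and $t=3$ under the cut $\{q_3\}$ leaves a single arc with $4$ augmenting edges). This only costs a constant factor in the exponent (one gets $p^{4L_C}$ rather than $p^{2L_C}$, or one can index the cuts cyclically and pay a factor $2$ in the budget instead), so it does not threaten the theorem, but the per-sub-component length bound as written can fail for cycles.
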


\begin{lemma}\label{lem:shortpaths}
If $w(O_{r-1}) < (1-\epsilon)w(M_r)$, then $\sum_{C \in U_r: L_C < 2/\epsilon} \Delta_C \geq \frac{\epsilon}{2} w(\M{r})$.	
\end{lemma}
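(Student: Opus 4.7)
The plan is to split the total value $\sum_{C \in U_r} \Delta_C$ into a ``long'' part (normalized length $\geq 2/\epsilon$) and a ``short'' part, lower bound the total, upper bound the long part, and subtract.

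First I would establish a lower bound on the total value of all augmenting components. By construction, $\sum_{C \in M_H \Delta M_L} \Delta_C = w(M_H) - w(M_L)$ where $M_H = \M{r}$ and $M_L = O_{r-1}$. Splitting this sum into augmenting components (positive $\Delta_C$) and non-augmenting components (non-positive $\Delta_C$), and dropping the non-positive contribution,
\begin{equation*}
\sum_{C \in U_r} \Delta_C \;\geq\; w(\M{r}) - w(O_{r-1}) \;>\; \epsilon\, w(\M{r}),
\end{equation*}
where the strict inequality uses the hypothesis $w(O_{r-1}) < (1-\epsilon)w(\M{r})$.

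Next I would upper bound the contribution from augmenting components with large normalized length. By the definition $L_C = w(Q_C)/\Delta_C$, for every $C \in U_r$ with $L_C \geq 2/\epsilon$ we have $\Delta_C \leq (\epsilon/2)\, w(Q_C)$. Since the augmenting edge sets $Q_C$ of distinct components are vertex-disjoint subsets of $\M{r}$, summing gives
\begin{equation*}
\sum_{C \in U_r:\, L_C \geq 2/\epsilon} \Delta_C \;\leq\; \frac{\epsilon}{2} \sum_{C \in U_r:\, L_C \geq 2/\epsilon} w(Q_C) \;\leq\; \frac{\epsilon}{2}\, w(\M{r}).
\end{equation*}

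Combining the two bounds,
\begin{equation*}
\sum_{C \in U_r:\, L_C < 2/\epsilon} \Delta_C \;=\; \sum_{C \in U_r} \Delta_C \;-\; \sum_{C \in U_r:\, L_C \geq 2/\epsilon} \Delta_C \;\geq\; \epsilon\, w(\M{r}) - \frac{\epsilon}{2}\, w(\M{r}) \;=\; \frac{\epsilon}{2}\, w(\M{r}),
\end{equation*}
which is the desired conclusion. I do not anticipate any substantial obstacle here; the only point that requires a line of justification is that the $Q_C$'s for distinct alternating components are edge-disjoint subsets of $\M{r}$ (immediate from the disjointness of connected components in $M_H \Delta M_L$), so their total weight is at most $w(\M{r})$.
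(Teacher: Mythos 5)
Your proof is correct and takes essentially the same approach as the paper: lower bound $\sum_{C\in U_r}\Delta_C$ by $w(\M{r})-w(O_{r-1})$, upper bound the long-component contribution by $\frac{\epsilon}{2}w(\M{r})$ using $\Delta_C=w(Q_C)/L_C$ and disjointness of the $Q_C\subseteq\M{r}$, and subtract. The only cosmetic difference is that you invoke the hypothesis $w(O_{r-1})<(1-\epsilon)w(\M{r})$ at the start rather than at the end, which changes nothing substantively.
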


We first show how with these two lemmas we can prove Theorem~\ref{thm:adaptive}.

\begin{proof}[Proof of Theorem~\ref{thm:adaptive}]
	By Observation~\ref{obs:standardforadaptive}, we know that for any round $r$, we have $w(M_r) \geq \opt$. Suppose we have not obtained a $(1-\epsilon)$-approximation by round $r-1$, then we have $$w(O_{r-1}) \leq (1-\epsilon) \opt \leq (1-\epsilon) w(M_r).$$
	Let $S$ be the set of augmenting components in $U_r$ with normalized length at most $2/\epsilon$. The inequality above satisfies the condition of Lemma~\ref{lem:shortpaths}, thus, we have 
	\begin{equation}\label{eq:deltaopt}
	\sum_{C \in S} \Delta_C \geq \frac{\epsilon}{2} w(M_r) \geq \frac{\epsilon}{2}\opt.	
	\end{equation}
	By Lemma~\ref{lem:augpathaddition}, querying the edges of each of the augmenting components in $U_r$ increases the expected matching size by at least $p^{2L_C}\Delta_C/2$. Therefore, 
	\begin{flalign*}
		&& \E[ w(O_r) - w(O_{r-1}) ] &\geq \frac{1}{2} \sum_{C \in U_r}p^{2L_C}\Delta_C &&\\
		&& &\geq \frac{1}{2} \sum_{C \in S}p^{2L_C}\Delta_C && \text{Since for any $C \in U_r$, $\Delta_C > 0$ and $S \subseteq U_r$.} \\
		&& &\geq \frac{1}{2} \sum_{C \in S}p^{4/\epsilon}\Delta_C && \text{Since for any $C \in S$, $L_C \leq 2/\epsilon$.}\\
		&& &\geq \frac{p^{4/\epsilon}}{2} \sum_{C \in S}\Delta_C\\
		&& &\geq \frac{\epsilon p^{4/\epsilon}}{4} \opt && \text{By (\ref{eq:deltaopt}).}\\
	\end{flalign*}
	This means that until we reach a $(1-\epsilon)$-approximation, we add, in expectation, a value of at least $\Omega(\epsilon p^{4/\epsilon} \opt)$ to our maximum weighted realized matching in each round. Therefore, in expectation, it takes at most $O(\frac{1}{\epsilon p^{4/\epsilon}})$ rounds to obtain a $(1-\epsilon)$-approximation.
\end{proof}

\subsubsection{Proof of Lemma~\ref{lem:augpathaddition}}

\begin{proof}[Proof of Lemma~\ref{lem:augpathaddition}]
	
Observe that the base edges of $C$ which are its edges that are in $O_{r-1}$ are already known to be realized by definition of $O_{r-1}$. However, the augmenting edges of $C$ which are the edges that are from matching $M_{r}$ might not have been queried before and thus might appear to be unrealized after being queried. The challenge, thus, is to show that in expectation the portion of augmenting edges of $C$ that will be realized augment $O_{r-1}$ by a total weight of at least $p^{2L_C}\Delta_C/2$.

Denote by $k$ the number of edges of $C$ that are from $M_r$. Since each of these edges are realized with probability at least $p$ independently from other edges, with probability at least $p^k$ all of them are realized and if they are all realized they augment the base matching $O_{r-1}$ by at least $\Delta_C$. Hence they add, in expectation, a total weight of at least $p^k\Delta_C$. If $p^k \Delta_C \geq p^{2L_C}\Delta_C/2$ we are done. However this might not be the case when $L_C$ is much smaller than $k$. To handle this, we show that even if  small portions of $C$ are realized, they augment the base matching by a large enough value and argue that the expected augmentation value is at least $p^{2L_C}\Delta_C/2$.

\begin{definition}\label{def:ordering}
Depending on the structure of component $C$, label its edges in the following way:
	\begin{enumerate}
		\item If $C$ is a path starting with an edge of $M_r$, denote its edges by $(q_1, b_1, q_2, b_2, \ldots)$ that is in the order that the edges appear in the path.
		\item if $C$ is a path starting with an edge of $O_{r-1}$, denote its edges by $(b_0, q_1, b_1, q_2, b_2, \ldots)$ that is in the order that the edges appear in the path.
		\item if $C$ is a cycle, denote its edges by $(q_1, b_1, q_2, b_2, \ldots)$ that is in the order that the edges appear in the cycle, starting from an arbitrary augmenting edge of $C$.
	\end{enumerate}
\end{definition}

	Suppose for ease of exposition that $\alpha := 2L_C$ is an integer. For any integer $i \in [\alpha]$, define $D_i$ to be the following set of edges of $M_r$ in $C$:
	\begin{equation*}
		D_i := \{ q_i, q_{i+ \alpha}, q_{i+2\alpha}, q_{i+3\alpha}, \ldots\}.
	\end{equation*}
	
	\begin{claim}\label{claim:shiftingsubsets}
		There exists some integer $i \in [\alpha]$ for which $w(D_i) \leq w(Q_C)/\alpha$.
	\end{claim}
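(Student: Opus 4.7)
The plan is a one-line averaging argument; the only thing to verify carefully is that the sets $D_1,\dots,D_\alpha$ genuinely partition $Q_C$.

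First I would check the partition property using Definition~\ref{def:ordering}. In each of the three cases described there (path starting with an $M_r$-edge, path starting with an $O_{r-1}$-edge, or cycle), the augmenting edges $Q_C = C \cap M_r$ are enumerated as $q_1, q_2, \ldots, q_k$ without gaps, where $k = |Q_C|$. For any $j \in [k]$, write $j = i + t\alpha$ uniquely with $i \in [\alpha]$ and $t \geq 0$; then $q_j \in D_i$ for exactly this one value of $i$. Conversely, every element of each $D_i$ is some $q_j$. Hence $\{D_1, \ldots, D_\alpha\}$ is a partition of $Q_C$.

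From the partition property, $\sum_{i=1}^{\alpha} w(D_i) = w(Q_C)$. Since the $\alpha$ nonnegative numbers $w(D_1), \ldots, w(D_\alpha)$ sum to $w(Q_C)$, an averaging (pigeonhole) argument gives
\[
\min_{i \in [\alpha]} w(D_i) \;\leq\; \frac{1}{\alpha} \sum_{i=1}^{\alpha} w(D_i) \;=\; \frac{w(Q_C)}{\alpha},
\]
and picking such a minimizing $i$ proves the claim.

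I do not anticipate any real obstacle here: the statement is essentially pigeonhole on a partition into $\alpha$ congruence classes. The only mildly delicate point is the indexing convention, which is fully pinned down by Definition~\ref{def:ordering} so that the labels $q_1, \ldots, q_k$ exhaust $Q_C$ and the residue-mod-$\alpha$ grouping is well defined.
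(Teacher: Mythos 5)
Your proof is correct and follows essentially the same approach as the paper: verify that $D_1,\dots,D_\alpha$ partition $Q_C$ (via the residue-mod-$\alpha$ indexing from Definition~\ref{def:ordering}) and then apply a straightforward averaging argument to conclude some $D_i$ has weight at most $w(Q_C)/\alpha$.
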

	\begin{proof}
		It is easy to confirm by definition that for any $i, j \in [\alpha]$ with $i \not= j$, we have $D_i \cap D_j = \emptyset$ and that $\cup_{i\in[\alpha]}D_i = w(\cup_i \{q_i\}) =  w(Q_C)$. That is, the edges of $Q_C$ are partitioned by $D_i$'s into $\alpha$ disjoint subsets. Thus, the average total weight of $D_i$ for $i \in [\alpha]$ is $w(Q_C)/\alpha$, meaning that there should be some $i \in [\alpha]$ with $w(D_i)$ no more than this average value.
	\end{proof}
	
	Now take an arbitrary $i \in [\alpha]$ for which $w(D_i) \leq w(Q_C)/\alpha$ and ``delete" all the edges of $D_i$ from $C$. This procedure decomposes our augmenting component $C$ into a set $F=\{C_1, C_2, \ldots, C_t\}$ of smaller vertex disjoint connected components that we call {\em sub-components}. While each of these sub-components has the good property that it has a higher chance of being completely realized compared to the much longer original component $C$, we also need to take into account the loss of value resulted by deleting the edges of $D_i$ from \M{r}. Nonetheless, we argue that if we sum up the expected augmentation value of each of these vertex disjoint sub-components, the total expected value is as large as our desired value of $p^{2L_C}\Delta_C/2$.
	
	One can confirm by the labeling procedure defined in Definition~\ref{def:ordering} that after removal of the edges in $D_i$, each sub-component $C_j \in F$ has at most $\alpha$ edges of $Q_C$. Therefore, for any sub-component $C_j$, all the edges in $C_j \cap M_r$ will be realized with probability at least $p^\alpha = p^{2L_C}$.
	
	Denote by $\Delta_{C_j} := (C_j \cap M_r) - (C_j \cap O_{r-1})$ the {\em value} of sub-component $C_j$. By observation above, each sub-component $C_j$ augments its base matching by a total weight of at least $p^{2L_C}\Delta_{C_j}$ in expectation. Hence the total augmentation value of all sub-components is, in expectation, at least
	\begin{equation}\label{eq:augmentvaluesubcomponents}
		\sum_{C_j \in F} p^{2L_C} \Delta_{C_j} = p^{2L_C}\sum_{C_j \in F} \Delta_{C_j}.
	\end{equation}
	It only suffices to show that $\sum_{C_j \in F} \Delta_{C_j}$ is sufficiently large for our purpose. To see this, observe that $\sum_{C_j \in F} \Delta_{C_j}$ is essentially equal to the total value $\Delta_C$ of augmenting component $C$ minus the weight of the edges of $D_i$ since we removed them. However, since $w(D_i) \leq w(Q_C)/2L_C$ by Claim~\ref{claim:shiftingsubsets}, we have
	\begin{flalign}
		\nonumber && \sum_{C_j \in F} \Delta_{C_j} &\geq \Delta_C - \frac{w(Q_C)}{2L_C} &&\\
		\nonumber && &\geq \Delta_C - \frac{w(Q_C)}{2w(Q_C)/\Delta_C} && \text{By definition $L_C = w(Q_C)/\Delta_C$.}\\
		\nonumber && &\geq \Delta_C - \frac{\Delta_C}{2}\\
		&& &\geq \frac{\Delta_C}{2}.\label{eq:subcomponentslargevalue}
	\end{flalign}
	Combining (\ref{eq:augmentvaluesubcomponents}) and (\ref{eq:subcomponentslargevalue}) we obtain our desired property that the total augmentation value of all sub-components is, in expectation at least $p^{2L_C}\Delta_C/2$, concluding the proof of Lemma~\ref{lem:augpathaddition}.
\end{proof}

\subsection{Proof of Lemma~\ref{lem:shortpaths}}
\begin{proof}[Proof of Lemma~\ref{lem:shortpaths}]
	By Definition~\ref{def:augpathprops}, the values of all augmenting components in $U_r$ add up to at least $w(M_r) - w(O_{r-1})$, i.e.,
	\begin{equation*}
		\sum_{C \in U_r}\Delta_C \geq w(M_r) - w(O_{r-1}).
	\end{equation*}
	Separating the augmenting components based on their normalized length, we get
	\begin{flalign}
		\nonumber\sum_{C \in U_r: L_C \geq 2/\epsilon}\Delta_C + \sum_{C \in U_r: L_C < 2/\epsilon}\Delta_C &\geq w(M_r) - w(O_{r-1})\\
		\sum_{C \in U_r: L_C < 2/\epsilon}\Delta_C &\geq w(M_r) - w(O_{r-1}) - \sum_{C \in U_r: L_C \geq 2/\epsilon}\Delta_C.\label{eq:sumdeltasmall}
	\end{flalign}
	Next, we show that the total values of augmenting components with normalized length at least $2/\epsilon$ is small. Since by Definition~\ref{def:augpathprops}, $L_C = w(Q_C) / \Delta_C$, we have
	\begin{equation*}
		\sum_{C \in U_r: L_C \geq 2/\epsilon}\Delta_C = \sum_{C \in U_r: L_C \geq 2/\epsilon} \frac{w(Q_C)}{L_C}. 
	\end{equation*}
	On the other hand, since the normalized length of each component in the summation above is at least $2/\epsilon$, we have
	\begin{equation*}
		\sum_{C \in U_r: L_C \geq 2/\epsilon} \frac{w(Q_C)}{L_C} \leq \sum_{C \in U_r: L_C \geq 2/\epsilon} \frac{w(Q_C)}{2/\epsilon} \leq \frac{\epsilon}{2}\sum_{C \in U_r: L_C \geq 2/\epsilon} w(Q_C) \leq \frac{\epsilon}{2} w(M_r), 
	\end{equation*}
	which combined with the equation above implies
	\begin{equation}\label{eq:sumdeltalong}
		\sum_{C \in U_r: L_C \geq 2/\epsilon}\Delta_C \leq \frac{\epsilon}{2} w(M_r).
	\end{equation}
	Combining (\ref{eq:sumdeltalong}) and (\ref{eq:sumdeltasmall}), we get
	\begin{flalign*}
		&& \sum_{C \in U_r: L_C < 2/\epsilon}\Delta_C &\geq w(M_r) - w(O_{r-1}) - \frac{\epsilon}{2} w(M_r) && \\
		&& &= (1-\frac{\epsilon}{2})w(M_r) - w(O_{r-1})\\
		&& &\geq (1-\frac{\epsilon}{2})w(M_r) - (1-\epsilon)w(M_r) && \text{Since $w(O_{r-1}) \leq (1-\epsilon)w(M_r)$.}\\
		&& &= \frac{\epsilon}{2}w(M_r),
	\end{flalign*}
	which is the desired lower bound.
\end{proof}

\section{$(0.5 - \epsilon)$-Approximation via a Non-adaptive Algorithm}
In this section, we give a non-adaptive algorithm to obtain a $(1/2-\epsilon)$-approximation of the omniscient optimum. A non-adaptive algorithm should return a degree bounded (bounded by a constant) subset $H$ of the original graph's edge set $E$ with the goal to make $\expmatching{H}$ as close as possible to $\expmatching{E}$ where we use \expmatching{E'} to denote the expected weight of the maximum weighted matching of an edge set $E'$ given that each of its edges is realized independently with probability $p$.

To illustrate why it is not trivial to give a constant approximation, note that the realization probability $p$ could be any arbitrarily small constant. As such, to achieve a constant approximation, we need to pick a subset of edges that \emph{augment} each other well. For instance, by only picking the maximum weighted matching $M$ of the input graph $G$ as our subgraph $H$, the expected matching weight would only be $p \cdot w(M)$ which might be much smaller than the omniscient optimum, especially if $p$ is small.

Our algorithm, which is formally given as Algorithm~\ref{alg:nonadaptive}, is as follows: pick a maximum weighted matching, add it to the solution and remove its edges. We repeat this for $R = O(\frac{1}{\epsilon p^{4/\epsilon}})$ steps and prove it achieves a $(1/2 -\epsilon)$-approximation.

\begin{algorithm}
  \caption{Non-adaptive algorithm for weighted stochastic matching: $(1/2-\epsilon)$-approximation.}
  \label{alg:nonadaptive}
  \begin{algorithmic}[1]
  	\Statex \textbf{Input:} Input graph $G=(V, E)$.
  	\Statex \textbf{Parameter:} $R$.
  	\State $\Estar{} \gets E$
	\For{$r = 1, \ldots, R$}
		\State Find a maximum weighted matching \M{r} in $(V, \Estar{})$.
		\State $\Estar{} \gets \Estar - \M{r}$
	\EndFor
	\State Query the edges in $\cup_{r=1}^{R}\M{r}$ and return the maximum realized matching in it.
  \end{algorithmic}
\end{algorithm}

We refer to the iterations of the for loop in Algorithm~\ref{alg:nonadaptive} as ``rounds'' here. However, we emphasize that this is different from the adaptivity rounds of Algorithm~\ref{alg:adaptive} since here we only query once at the end of algorithm in contrast to Algorithm~\ref{alg:adaptive} that queries the picked edges at each round.

\begin{theorem}\label{thm:nonadaptive}
	For any graph $G=(V, E)$, and any arbitrarily small constant $\epsilon > 0$, Algorithm~\ref{alg:nonadaptive} returns a matching whose expected weight is at least $(1/2-\epsilon)\expmatching{E}$ for $R = O(\frac{1}{\epsilon p^{4/\epsilon}})$.
\end{theorem}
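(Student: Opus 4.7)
The plan is to mirror the proof of Theorem~\ref{thm:adaptive}, but replace the bound $w(\M{r}) \geq \opt$ (which holds trivially in the adaptive case because $\M{r}$ is a maximum matching over a superset of the realized edges) by a subtler bound that accounts for the fact that in Algorithm~\ref{alg:nonadaptive} the matching $\M{r}$ is chosen from $E \setminus H_{r-1}$, where $H_{r-1} := \cup_{i<r}\M{i}$. Define $O'_r := \matching{H_r \cap \mathcal{E}}$ as the maximum realized matching among the edges picked through round $r$, and let $f_r := \E[w(O'_r)]$; since Algorithm~\ref{alg:nonadaptive} returns a matching of expected weight $f_R$, it suffices to show $f_R \geq (1/2 - \epsilon)\opt$.

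The new ingredient is the inequality $w(\M{r}) + f_{r-1} \geq \opt$. To prove it, fix a realization $\mathcal{E}$ and split the omniscient optimum $M^\star := \matching{\mathcal{E}}$ according to $H_{r-1}$: the edges of $M^\star$ inside $H_{r-1}$ form a realized matching in $H_{r-1}$ of weight at most $w(O'_{r-1})$, while those outside $H_{r-1}$ form a matching in $E \setminus H_{r-1}$ of weight at most $w(\M{r})$ by maximality of $\M{r}$ in that edge set. Summing and averaging over $\mathcal{E}$ gives the claim. Consequently, whenever $f_{r-1} < (1/2 - \epsilon)\opt$ we have $w(\M{r}) > (1/2 + \epsilon)\opt$, so there is a multiplicative gap of $1 - \Theta(\epsilon)$ between $f_{r-1}$ and $w(\M{r})$ that the augmenting-component machinery can exploit.

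The rest of the analysis reuses Section~\ref{sec:adaptiveanalysis}. Conditioning on the realization of $H_{r-1}$ so that $O'_{r-1}$ is fixed, and observing that $\M{r}$ is edge-disjoint from $H_{r-1}$ so its edges are still independently realized with probability $p$, Lemma~\ref{lem:augpathaddition} applies verbatim to each augmenting component of $\M{r} \Delta O'_{r-1}$. A direct rereading of the computation inside the proof of Lemma~\ref{lem:shortpaths} (dropping its hypothesis) gives the unconditional inequality $\sum_{C : L_C < 2/\epsilon} \Delta_C \geq (1 - \epsilon/2)\,w(\M{r}) - w(O'_{r-1})$. Combining with Lemma~\ref{lem:augpathaddition}, taking the outer expectation over $O'_{r-1}$, and applying Jensen's inequality to the convex function $x \mapsto \max\{0,\,c - x\}$ (legitimate because $w(\M{r})$ is deterministic in the non-adaptive algorithm) yields
\begin{equation*}
f_r - f_{r-1} \;\geq\; \tfrac{p^{4/\epsilon}}{2}\cdot\max\Bigl\{0,\;(1-\epsilon/2)\,w(\M{r}) - f_{r-1}\Bigr\}.
\end{equation*}

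Plugging $w(\M{r}) > (1/2 + \epsilon)\opt$ and $f_{r-1} < (1/2 - \epsilon)\opt$ into the right-hand side, a short calculation shows the argument of the maximum is at least $\Omega(\epsilon)\,\opt$, so every round with $f_{r-1} < (1/2 - \epsilon)\opt$ contributes $\Omega(\epsilon\,p^{4/\epsilon})\opt$ expected progress; either the threshold is crossed within $R = O(1/(\epsilon\,p^{4/\epsilon}))$ rounds or the accumulated gains already exceed it. The most delicate step to get right will be the transition from the per-realization augmenting-component statements to their expectation versions: the Jensen step is what lets the gain bound depend on the deterministic $f_{r-1}$ rather than on the random $w(O'_{r-1})$, and it crucially relies on the fact that $\M{r}$ is chosen independently of the realization.
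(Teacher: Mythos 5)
Your proof is correct and follows essentially the same route as the paper: lower-bound $w(\M{r})$ by splitting the omniscient optimum between $H_{r-1}$ and its complement, then reuse Lemmas~\ref{lem:augpathaddition} and~\ref{lem:shortpaths}. Your inequality $w(\M{r}) + f_{r-1} \geq \opt$ is equivalent in content to what the paper obtains via Observation~\ref{obs:gooli} and Lemma~\ref{lem:nextishalf} (indeed your direct split of $\matching{\mathcal{E}}$ is just an unpacking of Observation~\ref{obs:gooli}); the one place you go beyond the paper is in spelling out the passage from per-realization augmenting-component gains to an expectation bound on $f_r - f_{r-1}$ via Jensen, a step the paper leaves implicit in its final ``adapt the lemmas'' sentence.
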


\begin{proof} The proof is similar to the proof of the adaptive algorithm with some minor differences that we cover here. The main difference that indeed causes the approximation factor to be $1/2-\epsilon$ instead of $1-\epsilon$ is that the weight of matching $M_r$ that we pick at round $r$ might be smaller than that of the omniscient optimum. However, denoting by $H_r := \cup_{i=1}^{r}M_r$ our subgraph by round $r$, we shows that until we obtain a $(1/2-\epsilon)$-approximation, the weight of the next matching that we pick is at least $1/2\expmatching{E}$. We start with the following auxiliary observation.

\begin{observation}\label{obs:gooli}
	For any partitioning of the edge set $E$ into two subsets $E_1$ and $E_2$ with $E_1 \cup E_2 = E$, we have $w(\matching{E_1}) + w(\matching{E_2}) \geq w(\matching{E})$.
\end{observation}
\begin{proof}
	 Let $\mu$ be the maximum weighted matching of $E$, and define $\mu_1 = \mu \cap E_1$ and $\mu_2 = \mu \cap E_2$. Observe that $\mu_1$ and $\mu_2$ are both valid matchings of $E_1$ and $E_2$ respectively, which implies $w(\matching{E_1}) \geq w(\mu_1)$ and $w(\matching{E_2}) \geq w(\mu_2)$. Also note that $w(\mu_1) + w(\mu_2) \geq w(\mu)$ since $E_1 \cup E_2 = E$. Combining these two observations, we get $$w(\matching{E_1}) + w(\matching{E_2}) \geq w(\mu_1) + w(\mu_2) \geq w(\mu) = w(\matching{E}),$$ completing the proof.
\end{proof}

\begin{lemma}\label{lem:nextishalf}
	For any round $r$ of Algorithm~\ref{alg:nonadaptive}, if $\expmatching{H_{r-1}} < \expmatching{E}/2$, then $w(M_r) \geq \expmatching{E}/2$.
\end{lemma}
\begin{proof}
	Let us, for any $r$, denote by $N_r$ the edges in $E \backslash H_r$. By this definition, we have $N_r \cup H_r = E$. This further implies that for any subset $E'$ of $E$, we have $(N_r \cap E') \cup (H_r \cap E') = E'$. Combining this with Observation~\ref{obs:gooli}, we get that for any realization $G_p=(V, E_p)$ of $G$, we have $w(\matching{H_r \cap E_p}) + w(\matching{N_r \cap E_p}) \geq w(\matching{E_p})$. Thus, we have
	\begin{flalign*}
		\E_{E_p} \Big[ w(\matching{H_r \cap E_p}) + w(\matching{N_r \cap E_p}) \Big] &\geq \E_{E_p} \Big[w(\matching{E_p})\Big]\\
		\E_{E_p} \Big[ w(\matching{H_r \cap E_p})\Big] + \E_{E_p}\Big[w(\matching{N_r \cap E_p}) \Big] &\geq \E_{E_p} \Big[w(\matching{E_p})\Big]\\
		\expmatching{H_r} + \expmatching{N_r} &\geq \expmatching{E}.
	\end{flalign*}
	Now, having $\expmatching{H_{r-1}} < \expmatching{E}/2$ implies that $\expmatching{N_{r-1}} \geq \expmatching{E}/2$ meaning that there is at least one matching of weight at least $\expmatching{E}/2$ in $N_{r-1}$. Since all the edges in $N_{r-1}$ can be chosen in the matching $M_r$ picked in the next round, we have $w(M_r) \geq w(\matching{N_{r-1}}) \geq \expmatching{E}/2$ as desired.
\end{proof}

Our goal is to show that if $\expmatching{H_{r-1}} < (1/2 - \epsilon)\expmatching{E}$ then $\expmatching{H_r} - \expmatching{H_{r-1}}$ is larger than $\Omega(p^{4/\epsilon}\epsilon\expmatching{E})$ which implies that after $O(\frac{1}{\epsilon p^{4/\epsilon}})$ rounds we obtain a $(1/2-\epsilon)$-approximation. To do so, similar to the adaptive case, we maintain the maximum realized weighted matching $O_r$ obtained by round $r$ of Algorithm~\ref{alg:nonadaptive} and show that if $O_{r-1}$ is smaller than $(1/2-\epsilon)\expmatching{E}$, the next matching $M_r$ augments $O_{r-1}$ by at least $\epsilon p^{4/\epsilon}\expmatching{E}$ in expectation.

Recall that the argument in proving Theorem~\ref{thm:adaptive} was based on the fact (Observation~\ref{obs:standardforadaptive}) that the matching that we pick in each round has weight at least $\opt$, and argued that querying its edges increases the weight of our maintained realized matching by a large factor in expectation. For the non-adaptive algorithm, however, the matching that we pick in each round might not be this large since we remove the previously picked edges from the graph. Instead, we know by Lemma~\ref{lem:nextishalf}, that until we obtain a $(1/2-\epsilon)$-approximation, its weight $w(M_r)$ is at least $\expmatching{E}/2$. Therefore, we can adapt\footnote{We emphasize that it is crucial that the proofs of these two lemmas do not depend at all on the adaptivity of Algorithm~\ref{alg:adaptive}.} Lemmas~\ref{lem:augpathaddition} and \ref{lem:shortpaths} to argue that we iteratively improve the weight of our maintained matching to obtain a $(1/2-\epsilon)$-approximation in only $O(\frac{1}{\epsilon p^{4/\epsilon}})$ iterations of Algorithm~\ref{alg:nonadaptive}. 

 \end{proof}

\section{Open Problems \& Future Directions}
An interesting future direction is to improve the approximation factor of our non-adaptive algorithm for weighted matchings. It was shown by Assadi et al. \cite{DBLP:conf/sigecom/AssadiKL17} that it is possible to obtain a slightly better than half approximation for the unweighted case, however, their analysis does not carry over to weighted graphs since it is highly tailored for cardinality matchings.

Another open question is how close we can get to the known lower bound of $\Omega(\log (1/\epsilon) / p)$ on the number of per-vertex queries to achieve a $(1-\epsilon)$-approximation adaptive algorithm. For the unweighted case, an algorithm with linear dependence on both $\epsilon$ and $p$ is known \cite{DBLP:conf/sigecom/AssadiKL16}. Although the techniques there were   generalized to the weighted case by Maehara and Yamaguchi \cite{maeharatakanori}, the per-vertex queries --- despite not having an exponential dependence on $\epsilon$ and $p$ --- depends on structural properties of the graph such as the number of vertices or the edge-weights which we showed are not necessary.

\section{Acknowledgements}
We thank Mahsa Derakhshan for her helpful suggestions and ideas. We are also thankful to anonymous reviewers for their helpful comments.


\bibliographystyle{plain}
\bibliography{references}

\end{document}